\newcommand{\cent}[0]{\mbox{\textcent}}
\begin{document}

%
%
\pagestyle{headings}  

\title{From Quantum Query Complexity to \\State Complexity\thanks{Work of the first author was supported by
the Employment of Newly Graduated Doctors of Science for Scientific Excellence project/grant (CZ.1.07./2.3.00/30.0009)  of Czech Republic. Work of second author   was supported by the National
Natural Science Foundation of China (Nos. 61272058, 61073054).}}
\titlerunning{From Quantum Query Complexity to State Complexity}  
%
\author{Shenggen Zheng\inst{1}
 \and Daowen Qiu\inst{2}}
\authorrunning{S. Zheng, D. Qiu }   
%
\tocauthor{Shenggen Zheng,  Daowen Qiu}
\institute{
Faculty of Informatics, Masaryk University, Brno, 602 00, Czech Republic\\
\email{zhengshenggen@gmail.com}
\and
Department of Computer Science, Sun Yat-sen University, \\Guangzhou 510006, China\\
\email{ issqdw@mail.sysu.edu.cn}
}

\maketitle              

\begin{abstract}
State complexity of quantum finite automata is one of the  interesting topics in studying the power of quantum finite automata. It is therefore of importance to develop general methods how to show state succinctness results for quantum finite automata. One such  method is presented and demonstrated in this paper. In particular, we show that state  succinctness results can be derived out of query complexity results.

\end{abstract}
\section{Introduction}

An important way to get deeper insights into the power of various quantum resources and operations is to explore the power of various quantum variations of the basic models of classical  automata. Of a special interest is to do that for various quantum variations of the classical  automata, especially for those models that use very limited amounts of quantum  resources: states, correlations, operations and measurements. This paper aims to contribute to such a line of research.

Number of (basis) states used  is a natural complexity measure for (quantum) finite automata. The size of a (quantum) finite automaton is defined as the number of (basis) states of the (Hilbert) space on which the automaton will operate.
 In case of a hybrid, that is quantum/classical finite automata, it is natural to consider both complexity measures~--~the number of classical states and also the number of quantum (basis) states.

Quantum finite automata were  introduced by  Kondacs and Watrous \cite{Kon97} and also by Moore and Crutchfields \cite{Moo97}, and since that time they were intensively explored  \cite{Amb02,Bro99,Qiu11,Yak11}.  State complexity and succinctness results are an important
research area of the classical finite automata theory, see \cite{Yu95}, with a
variety of applications. Once quantum versions of classical finite automata were introduced  and explored, it started to be of large interest to find out, also through succinctness results, a relation between the power of classical and
 quantum finite automata models. This has turned out to be an area of surprising
 outcomes that again indicated that the relations between
 the classical and the corresponding quantum finite automata models are intriguing.
  In the past twenty years, state complexity of several variants of quantum finite automata were deeply and broadly studied \cite{Amb98,AmbNay02,Amb09,AmYa11,Ber05,BMP06,Fre09,Gru00,GQZ14,GQZ14b,Le06,Mer01,Mer02,Qiu14,Yak10,ZhgQiu112,Zhg12,Zhg13}.

State succinctness results were proved for some  special languages and  promise problems and for several automata models. The methods used to prove those results are various and often ad hoc.
It is therefore natural  to try to find out whether there are quite general methods to get state succinctness results for quantum finite automata.
 The answer is yes. We will show, in this paper, that state succinctness results can be derived in a nice way  out of  query complexity results. Here is the basic idea: State complexity is deeply related to communication complexity \cite{KusNis97b}.
  Buhrman et al. proved that various communication complexity results can be derived out of query complexity results \cite{Buh98}. If a communication protocol is simple enough, then we can use  quantum finite automata to implement it. By using this line of thought, state succinctness results can be derived.

 Quantum query complexity is the quantum generalization of the model of  decision tree complexity. In this model, an algorithm to compute a Boolean function $f:\{0,1\}^n\to \{0,1\}$ is charged for ``queries" to the input bits, while any intermediate computation is considered as free (see \cite{BdW02}).

Communication complexity was introduced by Yao \cite{Yao79} in 1979.
In the setting of two  parties, Alice is given   $x\in\{0,1\}^n$, Bob is given $y\in\{0,1\}^n$ and their task is to communicate in order to determine the value of some Boolean function $f:\{0,1\}^n\times\{0,1\}^n\to\{0,1\}$, while exchanging as small  number of bits   as possible. In this model, local computation
is considered to be free, but communication is considered to be  expensive and has to be minimized.
Moreover, for computation, Alice and Bob  can use all   power available.
There are usually three types of communication complexities considered according to the models of protocols used by Alice and Bob: deterministic, probabilistic and quantum.

Query complexity and communication complexity are related to each other. By using a simulation technique that transforms quantum query algorithms
to quantum communication protocols,  Buhrman et al. \cite{Buh98,Buh09} obtained new quantum communication protocols and showed  the first impressively  (exponential) gap between quantum and classical communication complexity. In the reverse direction, Buhrman et al. showed that how  to use lower
bounds for quantum communication protocols to derive lower bounds
for quantum query algorithms.

State complexity of finite automata and  communication complexity  are also related to each other.
We can use communication complexity results to prove  lower bounds on state complexity  \cite{Hro01,Kla00,KusNis97b}. On the other hand, if the communication protocols are easy enough, then they can be simulated  by finite automata and obtain new state complexity results (upper bounds) for finite automata.

Therefore, we can build connections from query complexity to state complexity. This could be a potential framework to get state succinctness results for quantum finite automata comparing to classical finite automata. We will demonstrate for several cases in this paper, that how to use   quantum query complexity results  to  derive state  succinctness results of finite automata.

We  first consider  the promise problem (partial function) studied in \cite{MJM11}. Namely, the problem
\begin{equation}
\mbox{DJ}'(x)=
\left\{
  \begin{array}{ll}
    1 & \hbox{if } W(x)\in\{0, 1, n-1, n\}\\
    0 & \hbox{if }W(x)=\frac{n}{2},
  \end{array}
\right.
\end{equation}
where $W(x)$ is the Hamming weight of $x$.
Montanaro et al. \cite{MJM11}  gave a quantum query algorithm for  $\mbox{DJ}'$ with 2 queries. However, their proof is quite complicated. Motivated by the method from \cite{AISJ13}, we  give a simpler quantum query algorithm with 2 queries for $\mbox{DJ}'$.

 Based on this simple query algorithm, we design a quantum communication protocol for the following promise problem

\begin{equation}
\mbox{EQ}'(x,y)=\left\{\begin{array}{ll}
                    1 &\ \mbox{if}\ H(x,y)\in\{0, 1, n-1, n \} \\
                    0 &\ \mbox{if}\ H(x,y)=\frac{n}{2},
                  \end{array}
 \right.
\end{equation}
where  $H(x,y)$ is the Hamming distance between bit strings $x$ and $y$. We further prove that the exact quantum communication complexity of $\mbox{EQ}'$ is ${\bf O}(\log n)$ while the deterministic communication complexity  is ${\bf \Omega}( n)$.

Finally, we consider the  promise problem $A(n)=(A_{yes}(n),A_{no}(n))$, where $A_{yes}(n)=\{x\#y\#\#x\#y\,|\,H(x,y)\in\{0, 1, n-1, n \}, \ x,y\in\{0,1\}^n\}$ and $ A_{no}(n)=\{x\#y\#\#x\#y\,|\,H(x,y)=\frac{n}{2},\ x,y\in\{0,1\}^n \}.$
We will prove that the promise problem $A(n)$ can be solved exactly by  a one-way finite automata with quantum and classical state (1QCFA)  with ${\bf O}(n^2)$ quantum basis states and ${\bf O}(n^3)$ classical states, whereas the sizes of the corresponding one-way deterministic finite automata (1DFA)  are $2^{{\bf \Omega}(n)}$.

The paper is structured as follows. In Section 2  basic concepts and notations are introduced and  models involved are
described in some details. A new quantum query algorithm is given for $\mbox{DJ}'$ in Section~3. Communication complexity of $\mbox{EQ}'$  is explored in Section~4. State complexity results for the promise problem $A(n)$ are showed in Section~5.
\section{Preliminaries}

In this section, we recall some  basic definitions about query complexity, communication complexity and quantum finite automata. Concerning basic concepts and notations of quantum information processing and  finite automata, we refer the reader to  \cite{Gru97,Gru99,Gru00,Qiu12}.

\subsection{Exact query complexity}

 Exact quantum query complexity for partial functions  was dealt with in \cite{BH97,DJ92,GQZ14} and for total functions  in \cite{Amb13,AISJ13,AGZ14,BBC+98,MJM11}.
 Concerning more basic concepts and notations concerning query complexity, we refer the reader to \cite{BdW02}.

An exact classical (deterministic) query algorithm for computing a Boolean function $f:\{0,1\}^n\to \{0,1\}$ can be described
by a decision tree. A decision tree $T$ is a rooted binary tree where each internal vertex
has exactly two children, each internal vertex is labeled with a variable $x_i$ and each leaf is labeled with a value 0 or 1. $T$ computes a Boolean function $f$ as follows: The start is at the root. If this is a leaf then stop. Otherwise, query the variable $x_i$ that labels the root. If $x_i=0$, then recursively evaluate the left subtree, if $x_i=1$ then recursively evaluate the right subtree. The output of the tree is the value of the leaf that is reached at the end of this
process.
The depth
of $T$ is the maximal length of a path from the root to a leaf (i.e. the worst-case number of queries
used on any input). The {\em exact classical query complexity} (deterministic
query complexity, decision tree complexity)  is the minimal depth over all decision trees computing $f$.

Let $f:\{0,1\}^n\to \{0,1\}$ be a Boolean function and $x = x_1x_2\cdots x_n$ be an input
bit string.   An exact quantum query algorithm for $f$
works in a Hilbert space with some fixed number of basis states. It starts in a
fixed starting state, then performs on it a sequence of  transformations
$U_1$, $Q$, $U_2$, $Q$, \ldots, $U_t$, $Q$, $U_{t+1}$.
Unitary transformations $U_i$ do not depend on
the input bits, while $Q$, called the  query transformation, does,
in the following way. Each of the basis states corresponds to either one or none
of the input bits. If the basis state $|\psi\rangle$ corresponds to the $i$-th
input bit, then $Q|\psi\rangle=(-1)^{x_i}|\psi\rangle$. If it does not correspond to any
input bit, then $Q$ leaves it unchanged: $Q|\psi\rangle=|\psi\rangle$. Finally, the algorithm performs a  measurement in the standard basis.
Depending on the result of the measurement, the algorithm outputs either 0 or 1
which must be equal to $f(x)$. The {\em exact quantum query complexity}  is the minimum number of queries made by any quantum algorithm computing $f$.

\subsection{Communication complexity}

We recall here only  very  basic concepts and notations of  communication complexity,
and we refer the reader to \cite{KusNis97b} for
more details. We will deal with the situation that there are  two
communicating  parties and with very simple tasks of
computing two inputs Boolean functions for the case  one input is known to
one party and the other input to the other party.
We will completely ignore computational resources needed by
parties and  focus solely on the amount of communication  need to be
exchanged between both parties in order to compute the value of a given Boolean function.

More technically, let $X, Y$ be finite subsets of $\{0,1\}^n$. We will consider  two-input functions $f: X\times Y\rightarrow \{0,1\}$ and two communicating parties.  Alice is given  $x\in X$ and Bob is given  $y\in Y$. They want to compute $f(x,y)$. If $f$ is  defined only on a proper subset of $X\times Y$,  $f$ is said to be a partial function or a promise problem.

The computation of   $f(x,y)$ will be  done using a communication protocol. During the
execution of the protocol, the two parties alternate roles in sending messages. Each of these
messages is a bit-string. The protocol, whose steps are based on the communication so far, specifies also for each step whether the communication terminates
(in which case it also specifies what is the output). If the communication is
not to terminate, the protocol specifies what kind of  message the sender (Alice or Bob) should send next, as
a function of its input and communication so far.

A deterministic communication protocol ${\cal P}$ computes
a (partial) function $f$, if for every (promised) input pair $(x,y)\in X\times Y$ the protocol terminates with the
value $f(x,y)$ as its output.
In a probabilistic  protocol, Alice and Bob may also flip coins during the protocol execution and proceed according to its output and the protocol can also have an erroneous output with a small probability.
In a  quantum protocol, Alice and Bob may   use quantum resources to produce the output or (qu)bits  for communication.

 Let ${\cal P}(x,y)$ denote the  output of the protocol ${\cal P}$.
 For an exact protocol, that always outputs the correct answer, $Pr({\cal P}(x,y)=f(x,y))=1$.

The communication complexity of a protocol ${\cal P}$  is the
worst case number of (qu)bits exchanged.  The communication complexity of $f$ is, with  which respect to the communication mode used,  the complexity of an
optimal protocol for $f$.

We will use $D(f)$ to denote the {\em deterministic communication complexity} and  $Q_E(f)$ to denote  the {\em exact   quantum communication complexity}.

\subsection{One-way finite automata with quantum and classical states}

In this subsection we recall the definition of 1QCFA.

{\em Two-way finite automata with quantum and classical states} were introduced by Ambainis and Watrous \cite{Amb02} and then explored in \cite{Yak10,ZhgQiu11,ZhgQiu112,Zhg12,Zhg13,Zhg13a}.
 1QCFA are one-way versions of 2QCFA,   which  were introduced by   Zheng et al. \cite{ZhgQiu112}.
Informally, a 1QCFA can be seen as a 1DFA which has access to a quantum
 memory of a constant size (dimension), upon which it performs
quantum transformations and measurements. Given a finite set of quantum basis states $Q$, we denote by $\mathcal{H}(Q)$
the Hilbert space spanned by $Q$. Let
$\mathcal{U}(\mathcal{H}(Q))$ and $\mathcal{O}(\mathcal{H}(Q))$
denote the sets of unitary operators and projective measurements over $\mathcal{H}(Q)$, respectively.

\begin{definition}
A {\it one-way finite automaton with quantum and classical states} $\mathcal{A}$ is specified by a 10-tuple
\begin{equation}
\mathcal{A}=(Q,S,\Sigma,\Theta,\Delta,\delta,|q_{0}\rangle,s_{0},S_{acc},S_{rej})
\end{equation}
where:
\begin{enumerate}
\item $Q$ is a finite set of orthonormal quantum basis states.
\item $S$ is a finite set of classical states.

\item $\Sigma$ is a finite alphabet of input symbols and let
$\Sigma'=\Sigma\cup \{\cent ,\$\}$, where $\cent$ will be used as the left end-marker and $\$$ as the right end-marker.

\item $|q_0\rangle\in Q$ is the initial quantum state.
\item $s_0$ is the initial classical state.
\item $S_{acc}\subset S$ and $S_{rej}\subset S$, where $S_{acc}\cap S_{rej}=\emptyset$, are  sets of
the classical accepting and rejecting states, respectively.

\item $\Theta$ is a quantum transition function
\begin{equation}
\Theta: S\setminus(S_{acc}\cup S_{rej})\times \Sigma'\to \mathcal{U}(\mathcal{H}(Q)),
\end{equation}
assigning to each pair $(s,\gamma)$ a
 unitary transformation.

\item $\Delta$ is a mapping
\begin{equation}
\Delta:S\times \Sigma'\rightarrow
\mathcal{O}(\mathcal{H}(Q)),
\end{equation}
where each $\Delta(s,\gamma)$ corresponds to a projective measurement
{\em (}a projective measurement will be taken each time a unitary
   transformation is applied; if we do not need a measurement,
we denote that $\Delta(s,\gamma)=I$, and we assume the result of the measurement to be a fixed $c${\em )}.

\item $\delta$ is a special transition function of classical states.
Let the results set of the measurement be $\mathcal{C}=\{c_{1},c_{2},\dots$,
$c_{s}\}$, then
\begin{equation}
\delta:S\times \Sigma' \times \mathcal{C}\rightarrow
S,
\end{equation}
 where $\delta(s,\gamma)(c_{i})=s'$ means
that if a tape symbol $\gamma\in \Sigma'$ is being
 scanned and the projective measurement result is $c_{i}$, then the  state $s$ is changed to $s'$.
\end{enumerate}
\end{definition}

Given an input $w=\sigma_1\cdots\sigma_l$, the word on the tape will be
$w=\cent w\$$ (for convenience, we denote $\sigma_0=\cent $ and $\sigma_{l+1}=\$$).
Now, we define the behavior of 1QCFA $\mathcal{A}$ on the input word $w$.
The computation starts in the classical state $s_0$ and the quantum state $|q_0\rangle$, then the
transformations associated with symbols in the word  $\sigma_0\sigma_1\cdots, \sigma_{l+1}$ are applied in succession.
The transformation associated
with a state $s\in S$ and a symbol $\sigma\in\Sigma'$ consists of three steps:
\begin{enumerate}
\item Firstly, $\Theta(s,\sigma)$ is applied to the current quantum state
$|\phi\rangle$, yielding the new state
$|\phi'\rangle=\Theta(s,\sigma)|\phi\rangle$.

\item Secondly, the observable $\Delta(s,\sigma)=\mathcal{O}$ is measured on
$|\phi'\rangle$. The set of possible results is $\mathcal{C}=\{c_1, \cdots, c_s\}$.
According to   quantum mechanics principles, such a
measurement yields the classical outcome $c_k$ with probability
$p_k=||P(c_k)|\phi'\rangle||^2$, and the quantum state of $\mathcal{A}$
collapses to $P(c_k)|\phi'\rangle/\sqrt{p_k}$.

\item Thirdly, the current classical state $s$ will be changed to $\delta(s,\sigma)(c_k) =s'.$

\end{enumerate}
An input word $w$ is assumed to be accepted (rejected) if and only if the classical state after scanning $\sigma_{l+1}$
is an accepting (rejecting) state. We assume that $\delta$ is well defined so that 1QCFA $\mathcal{A}$ always accepts or
rejects at the end of the computation.

Language acceptance is a special case of so called promise problem solving.
A {\em promise problem} is a pair $A = (A_{yes}, A_{no})$, where $A_{yes}$, $A_{no}\subset \Sigma^*$
are disjoint sets. Languages may be viewed as promise problems that obey the additional constraint
$A_{yes}\cup A_{no}=\Sigma^*$.

 A promise problem $A = (A_{yes}, A_{no})$ is solved  exactly by a finite automaton ${\cal A}$  if
\begin{enumerate}
\item[1.] $\forall w\in A_{yes}$, $Pr[{\cal A}\  \mbox{accepts}\  w]=1$, and
\item[2.] $\forall w\in  A_{no}$, $Pr[{\cal A}\ \mbox{rejects}\  w]=1$.
\end{enumerate}

\section{An exact quantum query algorithm for $\mbox{DJ}'(x)$}
Montanaro et al. \cite{MJM11}  gave a quantum algorithm for  $\mbox{DJ}'$ with 2 queries. However, their proof is  complicated.
Motivated by the method from \cite{AISJ13},
 we  give a simpler algorithm with 2 queries for $\mbox{DJ}'$ as follow:

We  use basis states $|0,0\rangle$, $|i,0\rangle$,  $|i,j\rangle$ and $|k\rangle$ with $0\leq i<j\leq n$ and $1\leq k\leq n-2$.
A basis state $|i,j\rangle$ corresponds to an input bit $x_i$ for $1\leq i\leq n$; a basis state $|k\rangle$ corresponds to an input bit $y_k$ for $1\leq k\leq n-2$ ($y_k$ is some certain bit $x_i$) and the other basis states do not correspond to any input bit.
\begin{enumerate}
  \item
The algorithm  ${\cal A}$ begins in the  state $|0,0\rangle$ and then a unitary mapping $U_1$ is applied on it:
\begin{equation}
    U_1|0,0\rangle=\sum_{i=1}^n\frac{1}{\sqrt{n}}|i,0\rangle.
\end{equation}
 \item
${\cal A}$ then  performs the query:
\begin{equation}
 \sum_{i=1}^n\frac{1}{\sqrt{n}}|i,0\rangle\to  \sum_{i=1}^n\frac{1}{\sqrt{n}}(-1)^{x_i}|i,0\rangle.
\end{equation}

 \item ${\cal A}$  performs a unitary mapping $U_2$ to the current state such that
 \begin{equation}
 U_2|i,0\rangle=\sum_{j>i\geq 1}\frac{1}{\sqrt{n}}|i,j\rangle-\sum_{1\leq j<i}\frac{1}{\sqrt{n}}|j,i\rangle+\frac{1}{\sqrt{n}}|0,0\rangle
 \end{equation}
 and the resulting quantum state will be
  \begin{equation}
  U_2\sum_{i=1}^n\frac{1}{\sqrt{n}}(-1)^{x_i}|i,0\rangle=\frac{1}{n}\sum_{i=1}^n(-1)^{x_i}|0,0\rangle+\frac{1}{n}\sum_{1\leq i<j}((-1)^{x_i}-(-1)^{x_j})|i,j\rangle.
   \end{equation}
   \item ${\cal A}$  measures the resulting state in the standard basis.  If the outcome is  $|0,0\rangle$, then $\sum_{i=1}^n(-1)^{x_i}\neq 0 $ and $\mbox{DJ}'(x)=1$. Otherwise, suppose that we get the state $|i,j\rangle$, then we have $x_i\neq x_j$. Let $y=x\setminus \{x_i,x_j\}$, we have $W(y)\in\{0,n-2,\frac{n-2}{2}\}$. If $W(y)=\frac{n-2}{2}$, then $\mbox{DJ}'(x)=0$. If  $W(y)\in \{0,n-2\} $, then $\mbox{DJ}'(x)=1$.
The remaining question is exactly the Deutsch-Jozsa promise problem \cite{DJ92} and we can get the answer with 1 query as follows:
we use the  subalgorithm ${\cal B}$ to solve the remaining promise problem using $n-2$  quantum basis  states $|1\rangle,\ldots,|n-2\rangle$ that will work as follows:
\begin{enumerate}
 \item ${\cal B}$  begins in the state $|1\rangle$ and performs on it a unitary transformation $U_3$ such that
   \begin{equation}
   U_3|1\rangle=\sum_{k=1}^{n-2} \frac{1}{\sqrt{n-2}}|k\rangle.
      \end{equation}
 \item ${\cal B}$  performs a query $Q$:
   \begin{equation}
   \sum_{k=1}^{n-2} \frac{1}{\sqrt{n-2}}|k\rangle\to   \sum_{k=1}^{n-2} \frac{1}{\sqrt{n-2}}(-1)^{y_k}|k\rangle
      \end{equation}
 \item ${\cal B}$  performs a  unitary transformation $U_4=U_3^{-1}$ and
  \begin{equation}
   U_3^{-1}  \sum_{k=1}^{n-2} \frac{1}{\sqrt{n-2}}(-1)^{y_k}|k\rangle=\frac{1}{n-2}\sum_{k=1}^{n-2}(-1)^{y_k}|1\rangle+\sum_{k=2}^{n-2}\beta_k|k\rangle,
      \end{equation}
      where $\beta_k$ are  amplitudes that we do not need to be specified  exactly.

  \item ${\cal B}$  measures the resulting state in the standard basis and outputs 1 if the measurement outcome is $|1\rangle$ and  0 otherwise.
 \end{enumerate}

 According to \cite{AISJ13}, the unitary mapping $U_2$ exists.  The rest of the proof is easy to verify. Obviously, the algorithm ${\cal A}$ uses 2 queries.
 \end{enumerate}

\section{Communication complexity of $\mbox{EQ}'(x,y)$}
In this section, we will prove that  $Q_E(\mbox{EQ}')$ is ${\bf O}(\log n)$ while $D(\mbox{EQ}')$ is ${\bf \Omega}(n)$.

\begin{theorem}\label{Th-1}
$Q_E(\mbox{EQ}')\in {\bf O}(\log n)$.
\end{theorem}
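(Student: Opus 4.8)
The plan is to reduce $\mbox{EQ}'$ to $\mbox{DJ}'$ and then convert the $2$-query exact quantum algorithm of Section~3 into a communication protocol via the query-to-communication simulation of Buhrman et al.~\cite{Buh98}. The starting observation is that the Hamming distance satisfies $H(x,y)=W(x\oplus y)$, where $x\oplus y$ is the bitwise XOR. Hence, writing $z=x\oplus y$, we have $\mbox{EQ}'(x,y)=\mbox{DJ}'(z)$, so it suffices for Alice and Bob to jointly evaluate the algorithm $\mathcal{A}$ (together with its subroutine $\mathcal{B}$) on the virtual input $z$.

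The point that makes the distributed simulation cheap is that a query to $z_i$ factorizes: since $z_i=x_i\oplus y_i$ we have $(-1)^{z_i}=(-1)^{x_i}(-1)^{y_i}$, so the query transformation $Q_z$ splits as $Q_z=Q_xQ_y$, where $Q_x$ applies the phase $(-1)^{x_i}$ and $Q_y$ applies $(-1)^{y_i}$ to the basis states associated with bit $i$. These two diagonal operators commute and depend only on Alice's and Bob's local inputs respectively. Alice will maintain the quantum workspace of $\mathcal{A}$, spanned by the basis states $|0,0\rangle$, $|i,0\rangle$, $|i,j\rangle$ and $|k\rangle$; this space has dimension ${\bf O}(n^2)$ and therefore fits in ${\bf O}(\log n)$ qubits. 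The input-independent unitaries $U_1,U_2,U_3,U_4$ are all applied by Alice locally at no communication cost.

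The protocol then runs as follows. To execute a query, Alice applies $Q_x$ to the workspace, sends the ${\bf O}(\log n)$ qubits to Bob, who applies $Q_y$ and returns them; this realizes $Q_z$ exactly. Since $\mathcal{A}$ makes only two queries, only a constant number of such round trips is required. The intermediate measurement is handled locally: after the first query Alice measures, and if she observes $|0,0\rangle$ she outputs $1$ and halts; otherwise she observes a pair $(i,j)$, announces it to Bob using ${\bf O}(\log n)$ classical bits, and the two parties run $\mathcal{B}$ on the remaining $n-2$ positions. In this second round the restricted query again factorizes, because each restricted bit still equals $x_m\oplus y_m$ for the corresponding original index $m$, which Bob can recover from the announced $(i,j)$. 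Summing over the constantly many rounds gives ${\bf O}(\log n)$ total communication.

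Correctness and exactness are then immediate: the distributed execution reproduces the state of $\mathcal{A}$ on input $z$ step for step, and hence outputs $\mbox{DJ}'(z)=\mbox{EQ}'(x,y)$ with certainty. The one place that needs care is the bookkeeping around the intermediate measurement and the subroutine $\mathcal{B}$---specifically, ensuring that once Alice learns $(i,j)$ the parties can agree, with only ${\bf O}(\log n)$ classical bits, on the $n-2$ original indices defining the string passed to $\mathcal{B}$, so that the second query round again splits as $Q_xQ_y$. Once this is in place, the bound $Q_E(\mbox{EQ}')\in{\bf O}(\log n)$ follows.
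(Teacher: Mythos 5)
Your proposal is correct and takes essentially the same route as the paper: the paper's protocol is precisely the Buhrman--Cleve--Wigderson-style simulation of the 2-query exact algorithm for $\mbox{DJ}'$ on $z=x\oplus y$, exploiting the factorization $(-1)^{z_i}=(-1)^{x_i}(-1)^{y_i}$ to split each query into local phase unitaries with ${\bf O}(\log n)$-qubit messages, plus the ${\bf O}(\log n)$ classical bits announcing the measured pair $(i,j)$ before the subroutine ${\cal B}$. The only difference is cosmetic: the paper has Bob apply $U_2$ (resp.\ $U_4$) and measure right after receiving the state, avoiding the return trips your version uses, which does not affect the asymptotic bound.
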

\begin{proof}
Assume that Alice is given an input $x=x_1,\cdots,x_n$ and Bob an input $y=y_1,\cdots,y_n$. The following quantum communication protocol ${\cal P}$ computes $\mbox{EQ}'$ using ${\bf O}(n^2)$ quantum basis  states  $|0,0\rangle$, $|i,0\rangle$,  $|i,j\rangle$ and $|k\rangle$ with $0\leq i<j\leq n$ and $1\leq k\leq n-2$
 as follows:

  \begin{enumerate}
  \item Alice begins with the quantum state $|\psi_0\rangle=|0,0\rangle$ and  performs on it the unitary map $U_1$. The  quantum state is changed to
        \begin{equation}
        |\psi_1\rangle=U_1|0,0\rangle=\frac{1}{\sqrt{n}}\sum_{i=1}^n|i,0\rangle.
            \end{equation}
  \item Alice applies the unitary map $U_x$ to the current state such that  $U_x|i,0\rangle=(-1)^{x_i}|i,0\rangle$ for $i>0$ and the quantum state is changed to
     \begin{equation}
     |\psi_2\rangle=U_x|\psi_1\rangle=\frac{1}{\sqrt{n}}\sum_{i=1}^n(-1)^{x_i}|i,0\rangle.
        \end{equation}

  \item Alice then sends her current quantum state $|\psi_2\rangle$ to Bob.

  \item Bob applies the unitary map $U_y$ to the state that he has received such that  $U_y|i,0\rangle=(-1)^{y_i}|i,0\rangle$ for $i>0$ and  the  quantum state is changed to
  \begin{equation}
  |\psi_3\rangle=U_y|\psi_2\rangle U_x=\frac{1}{\sqrt{n}}\sum_{i=1}^n(-1)^{x_i+y_i}|i,0\rangle.
   \end{equation}

  \item Bob applies the unitary map $U_2$ to his quantum state and the  quantum state is changed to
   \begin{equation}
   |\psi_4\rangle=U_2|\psi_3\rangle=\frac{1}{n}\sum_{i=1}^n(-1)^{x_i+y_i}|0,0\rangle+\frac{1}{n}\sum_{1\leq i<j}((-1)^{x_i+y_i}-(-1)^{x_j+y_j})|i,j\rangle.
   \end{equation}
 \item Bob  measures the resulting state in the standard basis and outputs 1 if the measurement outcome is $|0,0\rangle$. Otherwise, suppose that the outcome is $|i,j\rangle$. Bob sends $i$ and $j$ to Alice using classical bits.

  \item After Alice receives $i$ and $j$,  let $x'=x_1\ldots x_{i-1}x_{i+1}\ldots x_{j-1}x_{j+1}\ldots x_n$. (In convenience, we write   $x'=x'_1\ldots x'_{n-2}$).
   Alice applies $U_{3}$ to the basis state $|1\rangle$ such that the quantum state is changed to
    \begin{equation}
   |\psi_5\rangle=U_3|1\rangle=\frac{1}{\sqrt{n-2}}\sum_{k=1}^{n-2} |k\rangle.
     \end{equation}

 \item Alice then applies $U_{x'}$ to the current state such that   $U_{x'}|k\rangle=(-1)^{x'_k}|k\rangle$ for $k>0$ and the quantum state is changed to
 \begin{equation}
 |\psi_6\rangle= \frac{1}{\sqrt{n-2}}\sum_{k=1}^{n-2}(-1)^{x_k'}|k\rangle.
   \end{equation}

  \item Alice sends her current quantum state $|\psi_6\rangle$ to Bob.

  \item Bob applies the unitary map $U_{y'}$ to the state that he has received such that   $U_{y'}|k\rangle=(-1)^{y'_k}|k\rangle$ for $k>0$, where $y'=y_1\ldots y_{i-1}y_{i+1}\ldots y_{j-1}y_{j+1}\ldots y_n$. (In convenience, we write   $y'=y'_1\ldots y'_{n-2}$). The quantum state is changed to
       \begin{equation}
      |\psi_7\rangle=\frac{1}{\sqrt{n-2}}\sum_{k=1}^{n-2} (-1)^{x_k'+y_k'}|k\rangle.
    \end{equation}

  \item Bob performs a  unitary transformation $U_4=U_3^{-1}$ to the current state and the quantum state is changed to
    \begin{equation}
      |\psi_8\rangle=U_4 |\psi_7\rangle=\frac{1}{n-2}\sum_{k=1}^{n-2} (-1)^{x_k'+y_k'}|1\rangle+\sum_{k=2}^{n-2}\beta_k|k\rangle,
    \end{equation}
        where $\beta_k$ are  amplitudes that we do not need to be specified  exactly.

    \item Bob  measures the resulting state in standard basis and outputs 1 if the measurement outcome is $|1\rangle$ and outputs 0 otherwise.
\end{enumerate}
 The unitary transformations $U_1$, $U_2$, $U_3$ and $U_4$ are the same ones as defined in Section~3.

If $H(x,y)\in\{0,n\}$, then the quantum state in Step~5  is
  \begin{equation}
|\psi_4\rangle=\frac{1}{n}\sum_{i=1}^n(-1)^{x_i+y_i}|0,0\rangle=\pm|0,0\rangle.
  \end{equation}
 Bob will get the quantum state $|0,0\rangle$  after the measurement in Step~6 and output 1 as the result of $\mbox{EQ}'(x,y)$.

If $H(x,y)\in\{1,n-1\}$, then there are two cases:
\begin{description}
  \item[a)] If the  measurement outcome in Step~6 is  $|0,0\rangle$ and Bob outputs 1 as the result of $\mbox{EQ}'(x,y)$.
  \item[b)] If the  measurement outcome in Step~6 is  $|i,j\rangle$, then $H(x', y')\in\{0,n-2\}$ and the quantum state in Step~11 is
 \begin{equation}
      |\psi_8\rangle=\frac{1}{n-2}\sum_{k=1}^{n-2} (-1)^{x_k'+y_k'}|1\rangle=\pm |1\rangle
    \end{equation}
 Bob will get the quantum state $|1\rangle$  after the measurement in Step~12 and output 1 as the result of $\mbox{EQ}'(x,y)$.
\end{description}

If $H(x,y)=\frac{n}{2}$, then Bob will output 0 as the result of $\mbox{EQ}'(x,y)$ in Step~12.

In Step~3 Alice sends $\lceil\log (n^2)\rceil$ qubits, in Step~6 Bob sends  $2\lceil\log (n)\rceil$ bits and in Step~9 Alice sends  $\lceil\log (n-2)\rceil$  qubits. Since we can use qubits to send  bits, it is clear that this protocol  uses only ${\bf O}(\log n)$ qubits for communication.
\end{proof}

The proof for deterministic communication lower bound is similar to  the ones  in \cite{BdW02,Buh09}.  In order to obtain an exponential quantum speed-up in \cite{Buh09},  $\frac{n}{2}$ must be an even integer in the distributed Deutsch-Jozsa promise problem (see \cite{GQZ14} for argument). However, $\frac{n}{2}$ can be arbitrary integer in the promise problem $\mbox{EQ}'$ in this paper.

We use so called ``rectangles" lower bound method \cite{KusNis97b} to prove the result.

A {\em rectangle} in $X\times Y$ is a subset $R\subseteq X\times Y$ such that $R=A\times B$ for some $A\subseteq X$ and $ B\subseteq Y$. A rectangle $R=A\times B$ is called
$1(0)$-rectangle of a function $f:X\times Y\to \{0,1\}$
if for every $x\in A$ and $y\in B$ the value of $f(x,y)$ is 1 (0). Moreover, $C^i(f)$ is defined  as the minimum number of $i$-rectangles that partition the space of $i$-inputs (such inputs $x$ and $y$ that $f(x,y)=i$) of $f$.

\begin{lemma}\label{lm-d-lowbound} {\em \cite{KusNis97b}}
$D(f)\geq \max\{\log{ C^1(f)},\log{ C^0(f)}\}$.
\end{lemma}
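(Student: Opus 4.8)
The plan is to analyze the structure of an optimal deterministic protocol for $f$ and extract from it a partition of the $1$-inputs (respectively the $0$-inputs) into monochromatic rectangles whose number is controlled by the communication cost. First I would represent any deterministic protocol $\mathcal{P}$ for $f$ as a rooted binary tree $T$: each internal node is ``owned'' by Alice or Bob and has two children corresponding to the bit ($0$ or $1$) that its owner transmits at that point, while each leaf is labeled by the value the protocol outputs there. Since $\mathcal{P}$ exchanges at most $D(f)$ bits on every input, the tree $T$ has depth at most $D(f)$ and hence at most $2^{D(f)}$ leaves.

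The key structural fact I would establish is the \emph{rectangle property}: for every node $v$ of $T$, the set $R_v$ of input pairs $(x,y)$ whose computation passes through $v$ is a combinatorial rectangle, i.e. $R_v = A_v \times B_v$ for some $A_v \subseteq X$ and $B_v \subseteq Y$. This follows by induction on the depth of $v$. The root corresponds to the full rectangle $X \times Y$. For the inductive step, suppose $R_v = A_v \times B_v$ and that $v$ is owned by Alice; the bit she transmits depends only on her input $x$ and the transcript so far, which is fixed once $v$ is reached, so $A_v$ splits into disjoint sets $A_v^0$ and $A_v^1$ according to the bit sent, and the two children $v_0,v_1$ satisfy $R_{v_0} = A_v^0 \times B_v$ and $R_{v_1} = A_v^1 \times B_v$. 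The case where $v$ is owned by Bob is symmetric, splitting $B_v$ instead. Thus each $R_v$ is a rectangle, and the rectangles attached to the leaves partition $X \times Y$.

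Next I would invoke correctness of $\mathcal{P}$ to conclude that each leaf rectangle is monochromatic: if inputs $(x,y)$ and $(x',y')$ both reach a leaf $\ell$, then $\mathcal{P}$ outputs the same value on both, and since $\mathcal{P}$ computes $f$ this forces $f(x,y)=f(x',y')$. Hence every leaf labeled $1$ gives a $1$-rectangle, and these rectangles partition the $1$-inputs of $f$; since there are at most $2^{D(f)}$ leaves in total, we obtain a partition of the $1$-inputs into at most $2^{D(f)}$ $1$-rectangles, whence $C^1(f) \le 2^{D(f)}$ and $\log C^1(f) \le D(f)$. The identical argument applied to the leaves labeled $0$ yields $\log C^0(f) \le D(f)$, and taking the maximum gives the claimed inequality.

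The main obstacle is the inductive proof of the rectangle property, and in particular pinning down the observation that makes the induction go through: at any node the bit transmitted by its owner is a function of that owner's input alone together with the already-fixed transcript, so splitting a rectangle along one party's coordinate preserves the product structure. Everything else is routine bookkeeping about the protocol tree and a direct appeal to correctness.
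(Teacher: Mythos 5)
The paper offers no proof of this lemma at all---it is quoted directly from \cite{KusNis97b}---so the comparison is with the standard argument in that reference, which is exactly what you have reproduced: the protocol tree has depth at most $D(f)$ and hence at most $2^{D(f)}$ leaves, each leaf set is a rectangle by induction on the transcript, correctness makes each leaf rectangle monochromatic, and the $1$-labeled (resp.\ $0$-labeled) leaves then partition the $1$-inputs (resp.\ $0$-inputs), giving $C^1(f)\le 2^{D(f)}$ and $C^0(f)\le 2^{D(f)}$. Your proof is correct and complete, and it also extends verbatim to the partial-function setting of the paper's Remark, since there monochromaticity is required only on the promise domain and your correctness step applies exactly to pairs in that domain.
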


\begin{remark}
 For  a partial
function $f: X\times Y\to \{0,1\}$ with domain $\mathcal{ D}$, a rectangle $R=A\times B$ is called
$1(0)$-rectangle if  the value of $f(x,y)$ is 1(0) for every $(x,y)\in \mathcal{ D}\cap (A\times B)$  -- we do not care about  values for $(x,y)\not\in \mathcal{ D}$.
The above lemma still holds for promise problems (that is for partial functions).
\end{remark}

\begin{theorem}\label{Th-lowerbound}
{\em $D(\mbox{EQ}')\in {\bf \Omega}( n)$}.
\end{theorem}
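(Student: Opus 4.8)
The plan is to apply the rectangle method of Lemma~\ref{lm-d-lowbound} together with the Remark above, so that it suffices to lower bound a monochromatic partition number; I will use $C^1$, aiming to show $\log C^1(\mbox{EQ}')\in{\bf \Omega}(n)$, which gives $D(\mbox{EQ}')\geq\log C^1(\mbox{EQ}')\in{\bf \Omega}(n)$. Rather than attacking this count head on, I would reduce to the deterministic lower bound for the distributed Deutsch--Jozsa promise problem $\mathrm{DDJ}_m$ (output $1$ when $H=0$, output $0$ when $H=m/2$), which is known to satisfy $D(\mathrm{DDJ}_m)\in{\bf \Omega}(m)$ whenever $m/2$ is an even integer \cite{Buh09,GQZ14}. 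The role of the extra distance values $\{1,n-1\}$ in the YES-set of $\mbox{EQ}'$ is precisely to make this bound go through for every parity of $n/2$, and I would make that explicit by splitting into two cases.

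First, if $n/2$ is even, I observe that restricting the inputs of $\mbox{EQ}'$ to the pairs $(x,y)$ with $H(x,y)\in\{0,n/2\}$ gives exactly $\mathrm{DDJ}_n$, since on these inputs $\mbox{EQ}'$ outputs $1$ iff $H(x,y)=0$. Any deterministic protocol for $\mbox{EQ}'$ is in particular correct on this sub-domain, so restricting the input domain cannot increase the deterministic complexity, and hence $D(\mbox{EQ}')\geq D(\mathrm{DDJ}_n)\in{\bf \Omega}(n)$. Second, if $n/2$ is odd, then $(n-2)/2$ is even, and I would embed $\mathrm{DDJ}_{n-2}$ into $\mbox{EQ}'$ by the local padding $x=u00$, $y=v01$ for inputs $u,v\in\{0,1\}^{n-2}$, so that $H(x,y)=H(u,v)+1$. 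The YES case $H(u,v)=0$ maps to $H(x,y)=1$, a YES instance of $\mbox{EQ}'$ (this is exactly where the distance-$1$ region is indispensable), while the NO case $H(u,v)=(n-2)/2$ maps to $H(x,y)=n/2$, a NO instance. Since Alice and Bob append their two bits without any communication, a protocol for $\mbox{EQ}'$ solves $\mathrm{DDJ}_{n-2}$ at the same cost, giving $D(\mbox{EQ}')\geq D(\mathrm{DDJ}_{n-2})\in{\bf \Omega}(n)$.

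The main obstacle is the underlying even-parameter bound $D(\mathrm{DDJ}_m)\in{\bf \Omega}(m)$ itself, which I expect to be the only genuinely hard step. It cannot be obtained from a fooling set: a $1$-fooling set for $\mathrm{DDJ}_m$ is a diagonal $\{(x,x)\}$ whose strings are pairwise at Hamming distance $m/2$, i.e. an equidistant code, and such codes have only $O(m)$ codewords, yielding a useless ${\bf \Omega}(\log m)$ bound. One must instead lower bound the partition number $C^1$ directly: every $1$-rectangle is governed by a set $M$ of strings containing no two at distance exactly $m/2$ (an independent set in the distance-$m/2$ Cayley graph on $\{0,1\}^m$), and one needs that all such $M$ have size $2^{m-{\bf \Omega}(m)}$, forcing $C^1\geq 2^m/2^{m-{\bf \Omega}(m)}=2^{{\bf \Omega}(m)}$. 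This size bound is a Hoffman/eigenvalue (Krawtchouk) estimate that holds for even $m/2$ but fails for odd $m/2$, where the weight-$m/2$ layer is already such a set of size $2^{m}/\mathrm{poly}(m)$; this parity sensitivity is exactly what dictates the case split above. I would import this estimate from \cite{Buh09,GQZ14} rather than reprove it, and devote the write-up to verifying the two reductions and the resulting application of Lemma~\ref{lm-d-lowbound}.
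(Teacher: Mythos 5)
Your proposal is correct, but it takes a genuinely different route from the paper. The paper proves the bound directly: for $\frac{n}{2}$ even it takes the set $E=\{(x,x),(x,\overline{x})\,|\,W(x)=\frac{n}{2}\}$ of 1-inputs, shows that any 1-rectangle $R$ meets $E$ in a set $S$ whose first components pairwise avoid the intersection size $\frac{n}{4}$ (otherwise the rectangle property would force a 0-input into $R$), then invokes the Frankl--R\"odl forbidden-intersection theorem (Corollary 1.2 of \cite{Fr87}) to get $|S|\leq 1.99^n$, hence $C^1(\mbox{EQ}')\geq |E|/|S|=2^{{\bf \Omega}(n)}$ and the claim via Lemma~\ref{lm-d-lowbound}; for $\frac{n}{2}$ odd it repeats this with a set of pairs at Hamming distance $n-1$ and forbidden intersection $k+1$ (where $n=4k+2$). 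You instead reduce $\mbox{EQ}'$ to the distributed Deutsch--Jozsa problem and import its known deterministic lower bound \cite{Buh98,Buh09,GQZ14}: restriction of the domain for even $\frac{n}{2}$, and the padding $x=u00$, $y=v01$ (shifting the distance by one, so that the distance-$1$ YES region absorbs the parity obstruction) for odd $\frac{n}{2}$. Both proofs rest on the same combinatorial core---the imported DDJ bound is itself proved in \cite{Buh98} via exactly the Frankl--R\"odl theorem the paper cites---so what differs is the packaging. Your version is more modular and makes the role of the distances $\{1,n-1\}$ in the promise conceptually transparent; the paper's version is self-contained at the rectangle level, yields an explicit constant ($D(\mbox{EQ}')>0.0073n$), and does not depend on the precise form in which the DDJ lower bound is stated in the literature. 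One small inaccuracy, harmless since you use the result as a black box: the underlying estimate in \cite{Buh98,Buh09} is not a Hoffman/eigenvalue bound but the combinatorial Frankl--R\"odl theorem; your parity diagnosis (same-weight strings have even pairwise distances, so distance-$\frac{n}{2}$-avoiding sets can be huge when $\frac{n}{2}$ is odd) is nevertheless exactly right, and it is the same phenomenon that forces the paper's two-case split.
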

\begin{proof}
Let  ${\cal P}$ be a deterministic protocol for $\mbox{EQ}'$.  There are two cases:

{\bf Case 1:} $\frac{n}{2}$ is even. We  consider the set $E=\{(x,x),(x,\overline{x}) \,|\, W(x)=\frac{n}{2}\}$. For every $(x,y)\in E$, we have ${\cal P}(x,y)=1$. Suppose there is a 1-monochromatic rectangle $R=A\times B\subseteq \{0,1\}^n\times\{0,1\}^n$ such that ${\cal P}(x,y)=1$ for every promise pair $(x,y)\in R$. Let $S=R\cap E$. For $x,y\in\{0,1\}^n$, let us denote $|x\wedge y|=\sum_{i=1}^n x_i\wedge y_i$.
We now prove that for any distinct $(x,x'),(y,y')\in S$, $|x\wedge y|\neq \frac{n}{4}$.

According to the assumption that $(y,y')\in S\subset E$, we have $y'=y$ or $y'=\overline{y}$.  If $|x\wedge y|=\frac{n}{4}$, then $H(x,y)=2(\frac{n}{2}-\frac{n}{4})=\frac{n}{2}=H(x,\overline{y})$ and ${\cal P}(x,y')=0$. Since $(x,x')\in R$ and $(y,y')\in R$, we have $(x,y')\in R$ and  ${\cal P}(x,y')=0$, which is a contradiction.

 According to Corollary 1.2 from \cite{Fr87}, we have $|S|\leq 1.99^n$. Therefore, the minimum number of 1-monochromatic rectangles that partition the space of inputs is
 \begin{equation}
     C^1(\mbox{EQ}')\geq\frac{|E|}{|S|}\geq \frac{2{n\choose  n/2}}{(1.99)^n}>\frac{2^{n+1}/n}{(1.99)^n}.
\end{equation}
The deterministic communication complexity is then
 \begin{equation}
D(\mbox{EQ}')\geq \log{C^1(\mbox{EQ}')}>\log{\frac{2^{n+1}/n}{(1.99)^n}}>0.0073n.
 \end{equation}

 {\bf Case 2:} $\frac{n}{2}$ is odd. We assume that $n=4k+2$. We  consider the set $E=\{(x,x') \,|\, W(x)=\frac{n}{2}\}$, where $x_n'=x_n=1$ and $x_i'=1-x_i$ for $i<n$.  For every $(x,x')\in E$, we have $H(x,x')=n-1$ and ${\cal P}(x,x')=1$. Suppose there is a 1-monochromatic rectangle $R=A\times B\subseteq \{0,1\}^n\times\{0,1\}^n$ such that ${\cal P}(x,y)=1$ for every promise pair $(x,y)\in R$. Let $S=R\cap E$. We now prove that for any distinct $(x,x'),(y,y')\in S$, $\sum_{i=1}^{n-1}|x_i\wedge y_i|\neq k$, that is $|x\wedge y|\neq k+1$.

 If $|x\wedge y|= k+1$,  without a lost of generality, let $
    x=\overbrace{1\cdots 1}^{k}\overbrace{ 1\cdots 1}^{k}\overbrace{ 0\cdots 0}^{k}\overbrace{ 0\cdots 0}^{k+1}1,$ and $y=\overbrace{1\cdots 1}^{k}\overbrace{ 0\cdots 0}^{k}\overbrace{ 1\cdots 1}^{k}\overbrace{ 0\cdots 0}^{k+1}1.
$
 We have $H(x,y')=k+k+1=\frac{n}{2}$ and ${\cal P}(x,y')=0$. Since $(x,x')\in R$ and $(y,y')\in R$, we have $(x,y')\in R$ and  ${\cal P}(x,y')=0$, which is a contradiction.

 According to Corollary 1.2 from \cite{Fr87}, we have $|S|\leq 1.99^n$. Therefore, the minimum number of 1-monochromatic rectangles that partition the space of inputs is
 \begin{equation}
     C^1(\mbox{EQ}')\geq\frac{|E|}{|S|}\geq \frac{{n-1\choose  n/2-1}}{(1.99)^{n-1}}>\frac{2^{n-1}/(n-1)}{(1.99)^{n-1}}.
\end{equation}
The deterministic communication complexity is then
 \begin{equation}
D(\mbox{EQ}')\geq \log{C^1(\mbox{EQ}')}>\log{\frac{2^{n-1}/(n-1)}{(1.99)^{n-1}}}>0.0073n.
 \end{equation}

Therefore the theorem has been proved.
\end{proof}

\section{State succinctness results}
Now we are ready to derive the state succinctness result.

\begin{theorem}
 The promise problem $A(n)$ can be solved exactly by  a 1QCFA ${\cal A}(n)$  with ${\bf O}(n^2)$ quantum basis states and ${\bf O}(n^3)$ classical states, whereas the sizes of the corresponding 1DFA  are $2^{{\bf \Omega}(n)}$.
\end{theorem}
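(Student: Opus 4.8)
The plan is to realize the exact quantum communication protocol $\mathcal{P}$ of Theorem~\ref{Th-1} as a 1QCFA that scans the tape $\cent x\#y\#\#x\#y\$$, using the duplicated occurrences of $x$ and $y$ to carry out the two interactive rounds of $\mathcal{P}$, and storing the classical message $(i,j)$ that Bob returns to Alice inside the classical state. For the upper bound I would take the quantum register to be spanned by exactly the ${\bf O}(n^2)$ basis states $|0,0\rangle,|i,0\rangle,|i,j\rangle,|k\rangle$ already used in $\mathcal{P}$, and let the classical state track: which of the four blocks is currently scanned (constant), a position counter inside the current block (${\bf O}(n)$ values), the measured pair $(i,j)$ once available (${\bf O}(n^2)$ values), and an accept flag (constant). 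The product of these gives ${\bf O}(n^3)$ classical states.

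First I would process the first $x$-block: at $\cent$ apply $U_1$ to $|0,0\rangle$, and at position $i$ apply, conditioned on the scanned bit, the diagonal map multiplying $|i,0\rangle$ by $(-1)^{x_i}$; since these diagonal maps commute, the register holds $|\psi_2\rangle$ after the block, and reading the first $y$-block in the same phase-kickback manner yields $|\psi_3\rangle$. At the $\#\#$ boundary I would apply $U_2$ and measure in the standard basis: outcome $|0,0\rangle$ sets the accept flag, while outcome $|i,j\rangle$ is recorded in the classical state and a classical-state-controlled unitary resets the collapsed register $|i,j\rangle$ to $|\psi_5\rangle=U_3|1\rangle$. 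Rereading the second copies of $x$ and $y$, the automaton applies the phases $(-1)^{x'_k}$ and $(-1)^{y'_k}$ to $|k\rangle$, where the classical state translates a tape position $p\in\{1,\dots,n\}$ into the reduced index $k$ by skipping the stored positions $i$ and $j$; this produces $|\psi_7\rangle$. Finally, at $\$$, applying $U_4=U_3^{-1}$ and measuring reproduces Step~12 of $\mathcal{P}$, so the automaton accepts exactly when $H(x,y)\in\{0,1,n-1,n\}$ and rejects exactly when $H(x,y)=\frac{n}{2}$; correctness is inherited verbatim from the exactness of $\mathcal{P}$.

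For the lower bound I would invoke the standard reduction from state complexity to deterministic communication complexity. Given any 1DFA $M$ with $s$ states solving $A(n)$, Alice (holding $x$) and Bob (holding $y$) simulate $M$ on $\cent x\#y\#\#x\#y\$$ by passing the current state of $M$ across each boundary where the ownership of the input switches: Alice runs $M$ through $\cent x$ and sends the reached state, Bob continues through $\#y$ and sends back a state, Alice continues through $\#\#x$ and sends a state, and Bob finishes on $\#y\$$ and outputs $M$'s decision. This is an exact deterministic protocol for $\mbox{EQ}'$ using $3\lceil\log s\rceil$ bits, so $D(\mbox{EQ}')\le 3\lceil\log s\rceil$; combined with $D(\mbox{EQ}')\in{\bf \Omega}(n)$ from Theorem~\ref{Th-lowerbound} this forces $s=2^{{\bf \Omega}(n)}$.

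The main obstacle is the upper bound rather than the lower bound: the protocol $\mathcal{P}$ is genuinely interactive, so a single left-to-right scan of $x$ and $y$ cannot implement it, and the construction must exploit the duplicated input $x\#y\#\#x\#y$ together with the classically stored outcome $(i,j)$ to run the second Deutsch--Jozsa round on the punctured strings $x'$ and $y'$. The delicate points are verifying that each step of $\mathcal{P}$ is a legitimate classical-state-controlled unitary (in particular the reset $|i,j\rangle\mapsto|\psi_5\rangle$ and the position-to-reduced-index relabelling), and confirming that all this bookkeeping fits within ${\bf O}(n^3)$ classical and ${\bf O}(n^2)$ quantum basis states.
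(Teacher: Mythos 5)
Your proposal matches the paper's proof essentially step for step: the paper's 1QCFA uses the same ${\bf O}(n^2)$ quantum basis states, classical states $S_{ijp}$ encoding exactly your triple (measured pair $(i,j)$ plus a position counter $p$), the same classically-controlled reset (the paper's $U_3U_{ij}$ with $U_{ij}|i,j\rangle=|1\rangle$), and the same position-to-reduced-index phase maps $U_{ijp,\sigma}$ that skip the stored positions $i$ and $j$. The lower bound is also the same reduction; the paper merely cites the standard state-complexity-to-communication-complexity fact where you spell out the three-message simulation of the 1DFA.
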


\begin{proof}
Let $x=x_1\cdots x_n$ and $y=y_1\cdots y_n$ be in $\{0,1\}^n$. The input word on the tape will be $w=\cent x\#y\#\#x\#y\$$.  Let us consider a 1QCFA ${\cal A}(n)$ with ${\bf O}(n^2)$   quantum basis states
$\{|0,0\rangle,|i,0\rangle,|i,j\rangle,|k\rangle:0\leq i<j\leq n,\ 1\leq k\leq n-2  \}$ and ${\bf O}(n^3)$  classical states $\{S_{ijp}: 0\leq i,j, p\leq n+1\}$ (some of the states may be not used in the automaton actions).

 ${\cal A}(n)$  starts in  the initial quantum state  $|0,0\rangle$ and the initial classical state $S_{000}$. We use classical states $S_{ijp}$ ($1\leq p\leq n+1$) to point out the positions of the tape head that will provide some information for  quantum transformations. If  the classical state of ${\cal A}(n)$ will be $S_{ijp}$ ($1\leq p\leq n$) that will  mean that the next scanned symbol of the tape head is the $p$-th symbol of $x$($y$) and $S_{ijn+1}$ means that the next scanned symbol of  the tape head  is  $\#$($\$$).

  The behavior of ${\cal A}(n)$ is composed of two parts. The first part is the behavior of ${\cal A}(n)$ when reading the prefix of the input, namely $\cent x\#y\#$. In this part,  ${\cal A}(n)$ uses quantum basis state $\{|0,0\rangle,|i,0\rangle,|i,j\rangle:0\leq i<j\leq n  \}$ and  classical states $S_{00p}$ ($0\leq p\leq n+1$) to simulate Steps 1 to  6 in the proof of Theorem \ref{Th-1}.
After the measurement   at the end of the first part, if the outcome is $|0,0\rangle$, then the input is accepted. Otherwise, suppose  that the outcome is $|i,j\rangle$, the classical state will be changed to $S_{ij0}$ ($1\leq i<j\leq n$, which  means that $H(x_ix_j,y_iy_j)=1$ and the input bits $x_i,x_j,y_i,y_j$ will be skipped during the second part of the behavior of ${\cal A}(n)$.
The second part is the behavior of the automation when reading the second part of the input $\#x\#y\$$.  In this part,   ${\cal A}(n)$  uses quantum basis states
$\{|k\rangle:\ 1\leq k\leq n-2  \}$  and classical states $S_{ijp}$ ($0\leq p\leq n+1$) to simulate Steps 7 to 12 in the proof of Theorem \ref{Th-1}.
 The automaton proceeds as follows:

 \begin{enumerate}
\item ${\cal A}(n)$  reads the left end-marker $\cent $,  performs $U_1$ on the initial quantum state $|0,0\rangle$,  changes its classical state to $\delta(S_{000},\ \cent  )=S_{001}$, and moves the tape head one cell to the right.

\item Until the currently  scanned symbol $\sigma$ is not $\#$, ${\cal A}(n)$ does the following:
 \begin{enumerate}
 \item Applies $\Theta(S_{00p},\sigma)=U_{p,\sigma}$ to the current quantum state.
 \item Changes the classical state $S_{00p}$ to $S_{00p+1}$ and moves the tape head one cell to the right.
\end{enumerate}
\item ${\cal A}(n)$ changes the classical state $S_{00p+1}$ to  $S_{001}$ and moves the tape head one cell to the right.

\item Until the currently  scanned symbol $\sigma$ is not $\#$, ${\cal A}(n)$ does the following:
 \begin{enumerate}
 \item Applies $\Theta(S_{00p},\sigma)=U_{p,\sigma}$ to the current quantum state.
 \item Changes the classical state $S_{00p}$ to $S_{00p+1}$ and moves the tape head one cell to the right.
\end{enumerate}

\item When  $\#$  is reached,   ${\cal A}(n)$ performs $U_{2}$ on the current quantum state.
\item
${\cal A}(n)$ measures the current quantum state in the standard basis.   If the outcome is $|0,0\rangle$, ${\cal A}(n)$ accepts the input; otherwise,  suppose that the outcome is $|i,j\rangle$, ${\cal A}(n)$ changes the classical state to $S_{ij0}$, moves the tape head one cell to the right.

\item ${\cal A}(n)$ reads  $\#$,  applies $\Theta(S_{ij0},\#)=U_3U_{ij}$ to the current quantum state,  changes its classical state to $S_{ij1}$, and moves the tape head one cell to the right.

\item Until the currently  scanned symbol $\sigma$ is not $\#$, ${\cal A}(n)$ does the following:
 \begin{enumerate}
 \item Applies $\Theta(S_{ijp},\sigma)=U_{ijp,\sigma}$ to the current quantum state.
 \item Changes the classical state $S_{ijp}$ to $S_{ijp+1}$ and moves the tape head one cell to the right.
\end{enumerate}
\item ${\cal A}(n)$ changes the classical state $S_{ijp+1}$ to  $S_{ij1}$ and moves the tape head one cell to the right.

\item While the currently  scanned symbol $\sigma$ is not the right end-marker $\$$, ${\cal A}(n)$ does the following:
 \begin{enumerate}
 \item Applies $\Theta(S_{ijp},\sigma)=U_{ijp,\sigma}$ to the current quantum state.
 \item Changes the classical state $S_{ijp}$ to $S_{ijp+1}$ and moves the tape head one cell to the right.
\end{enumerate}

\item When the right end-marker  is reached,  ${\cal A}(n)$  performs $U_{4}$ on the current quantum state.
\item
${\cal A}(n)$ measures the current quantum state in the standard basis.   If the outcome is $|1\rangle$, ${\cal A}(n)$ accepts the input; otherwise,  rejects the input.
\end{enumerate}
where unitary transformations  $U_1$, $U_2$, $U_3$, $U_4$  are the  ones defined in the proof of Theorem \ref{Th-1} and
\begin{eqnarray*}
U_{p,\sigma}|i,j\rangle&=&(-1)^{\sigma}|i,j\rangle  \    \mbox{if}\ i=p;\\
U_{p,\sigma}|i,j\rangle&=& |i,j\rangle \ \mbox{if}\ i\neq p;\\
U_{ij}|i,j\rangle&=& |1\rangle;\\
U_{ijp,\sigma}|k\rangle&=&(-1)^{\sigma}|k\rangle  \    \mbox{if}\ p<i\ \mbox{and}\ k=p; \\
U_{ijp,\sigma}|k\rangle&=&|k\rangle  \    \mbox{if}\ p<i\ \mbox{and}\ k\neq p; \\
U_{ijp,\sigma}|k\rangle&=&|k\rangle  \    \mbox{if}\ p=i; \\
U_{ijp,\sigma}|k\rangle&=&(-1)^{\sigma}|k\rangle  \    \mbox{if}\ i<p<j\ \mbox{and}\ k=p-1; \\
U_{ijp,\sigma}|k\rangle&=&|k\rangle  \    \mbox{if}\ i<p<j\ \mbox{and}\ k\neq p-1; \\
U_{ijp,\sigma}|k\rangle&=&|k\rangle  \    \mbox{if}\ p=j; \\
U_{ijp,\sigma}|k\rangle&=&(-1)^{\sigma}|k\rangle  \    \mbox{if}\ p>j\ \mbox{and}\ k= p-2; \\
U_{ijp,\sigma}|k\rangle&=&|k\rangle  \    \mbox{if}\ p>j\ \mbox{and}\ k\neq p-2.
\end{eqnarray*}

It is easy to verify that unitary transformation $U_{p,\sigma}$,  $U_{ij}$ and $U_{ijp,\sigma}$ exist.
The rest  of the proof is   analogues to the proof in   Theorem \ref{Th-1}.

According to Theorem \ref{Th-lowerbound} , $D(\mbox{EQ}')\in {\bf \Omega}( n)$. Therefore, it is easy to see that the sizes of the corresponding 1DFA for  $A(n)$ are $2^{{\bf \Omega}(n)}$ \cite{KusNis97b}.
\end{proof}

\section*{Acknowledgements}
The first author would  like to thank Jozef Gruska for having a most constructive influence on him, teaching him how to think and write in a clear way, and for continuous support during his  postdoctoral research  with him.

\end{document}